\documentclass[12pt,a4paper]{article}
\usepackage{fullpage}
\usepackage{amssymb,amsmath,stmaryrd,amsthm}
\usepackage[mathscr]{eucal}
\usepackage{pstricks,pst-node,pst-text,pst-3d}

\theoremstyle{definition}

\newtheorem{theorem}{Theorem}
\newtheorem{proposition}{Proposition}
\newtheorem{lemma}{Lemma}
\newtheorem{corollary}{Corollary}

\theoremstyle{remark}
\newtheorem{remark}{Remark}

\def \bR {\mathbb{R}}

\def \1{{\mathbf{1}}}
\def \cA {\mathcal{A}}

\def \cC {\mathcal{C}}
\def \cF {\mathcal{F}}
\def \cG {\mathcal{G}}

\def \cK {\mathcal{K}}

\def \cMG {\mathcal{MG}}

\def \cQ {\mathcal{Q}}
\def \cP {\mathcal{P}}

\def \gS {\mathfrak{S}}
\def \core {\mathsf{C}}
\def \mcore {\mathsf{MC}}
\def \conv {\mathrm{conv}}

\def \ext {\mathrm{ext}}
\def \PI {\mathsf{PI}}
\def \I {\mathsf{I}}
\def \sh {\mathsf{sh}}
\def \S {\mathsf{S}}
\def \W {\mathsf{W}}

\begin{document}

\title{On the set of imputations induced\\ by the $k$-additive core}

\author{Michel GRABISCH$^{1}$\thanks{Corresponding
    author. Tel (+33) 1-44-07-82-85, Fax
    (+33) 1-44-07-83-01,
    email \texttt{michel.grabisch@univ-paris1.fr}} and Tong LI$^{2}$ \\
\normalsize          1. Paris School of Economics, University of Paris I\\
\normalsize          106-112, Bd de l'H\^opital, 75013 Paris, France\\
\normalsize          2. Civil Aviation Management Institute of China\\
\normalsize           Beijing Institute of Technology\\
\normalsize      \#3,East Road Huajiadi, Chaoyang District, Beijing, China 100102  \\
\normalsize         Email: \texttt{michel.grabisch@univ-paris1.fr, ttlitong@gmail.com}}

\date{}
\maketitle

\begin{abstract}
An extension to the classical notion of core is the notion of $k$-additive core,
that is, the set of $k$-additive games which dominate a given game, where a
$k$-additive game has its M\"obius transform (or Harsanyi dividends) vanishing
for subsets of more than $k$ elements. Therefore, the 1-additive core coincides
with the classical core. The advantages of the $k$-additive core is that it is
never empty once $k\geq 2$, and that it preserves the idea of coalitional
rationality. However, it produces $k$-imputations, that is, imputations on
individuals and coalitions of at most $k$ inidividuals, instead of a classical
imputation. Therefore one needs to derive a classical imputation from a
$k$-order imputation by a so-called sharing rule. The paper investigates what
set of imputations the $k$-additive core can produce from a given sharing rule.
\end{abstract}

{\bf Keywords:} game theory, core, $k$-additive game, selectope

\section{Introduction}
A central problem in cooperative game theory is to define a rational way to
share the total worth of a game. Specifically, let $N$ be the set of players,
and $v$ a game in characteristic function form, assigning to each coalition
$S\subseteq N$ a worth $v(S)$, which in the case of profit game, represents the
benefit arising from the cooperation among members of $S$. Suppose that the best
way to generate profit is to form the grand coalition $N$. An important question
is how to share the total benefit $v(N)$ among the players. Any systematic way
of sharing $v(N)$ is called a solution of the game. 

The core \cite{gil53,sha71} is one of the most popular concept of solution. It
is defined as the set of preimputations which are coalitionally rational, i.e.,
there is no coalition $S$ such that the value $v(S)$ that $S$ can achieve by
itself is strictly greater than the payoff $x(S)$ given to $S$. This rationality
condition ensures that no coalition has interest to leave the grand coalition
$N$.

The main drawback of the core is that it is often empty, so that other concepts
of solution have to be sought for. The literature abunds on this topic, and many
new solution concepts have been proposed, for example the kernel \cite{dama65},
the selectope \cite{hapeso77,dehape00}, the nucleolus \cite{sch69}, the Shapley
value \cite{sha53} and so on.

Although all these propositions have their own merits, they depart from the
fundamental idea of coalitional rationality of the core. To keep as much as
possible this idea, Grabisch and Miranda have proposed the notion of
$k$-additive core \cite{grmi07,migr05}. Roughly speaking, the condition of
coalitional rationality $x(S)\geq v(S)$ for all $S\subseteq N$ is preserved, but
the notion of imputation/payoff is enlarged: $x(S)$ is no more the sum of
payoffs to individuals in $S$, i.e., $x(S)=\sum_{i\in S}x_i$, but it is a sum of payoffs
to individuals and possibly to coalitions of size at most $k$ in $S$. Such general
imputations are called $k$-order imputations. It is proved in
\cite{migr05} that as soon as $k=2$, the $k$-additive core is never empty. The
drawback is that eventually each player should receive an individual
payoff. Therefore, once a $k$-order imputation has been selected, it remains in
a second step to compute from it a classical imputation.

The aim of this paper is precisely to study what kind of imputation one can find
through the $k$-additive core. We will show that this question is closely
related to the selectope, and that surprisingly, any preimputation can be
attained through the 2-additive core of a game.

The paper is organized as follows. Section~\ref{sec:def} introduces the basic
material on the $k$-additive core and related notions. Section~\ref{sec:bare}
gives some basic results about the convex polytopes of the monotonic
$k$-additive core and the convex part of the $k$-additive core. Then,
Sections~\ref{sec:kadd} and \ref{sec:kadd1} give the main results of the paper,
i.e., the set of imputations induced by some classes of sharing values on the
$k$-additive core and the monotonic $k$-additive core.

Throughout the paper, we will often omit braces for singletons and sets. Also we
will write sets in capital italic, collections of sets in capital
calligraphic, and mappings either in small italic or sans serif. For any
vector $x\in\bR^n$ and $S\subseteq\{1,\ldots,n\}$, we use the shorthand
$x(S):=\sum_{i\in S}x_i$. 

\section{Notations and definitions}\label{sec:def}
A \emph{game} is a pair ($N,v$) where $N:=\{1,\ldots,n\}$ is the set of players,
and $v:2^N\rightarrow \bR$ with $v(\emptyset)=0$. If there is no fear of
ambiguity, we will call a game simply $v$. 

We denote by 
$\mathcal{P}(N)=2^N$ the set of all subsets of $N$, while the set of
all subsets of $N$ of cardinality smaller or equal to $k$ is denoted
by $\mathcal{P}^{k}(N)$.

A game is \emph{monotone} if $S\subseteq T\subseteq N$ implies $v(S)\leq
v(T)$. It is \emph{additive} if $v(S\cup T)=v(S) + v(T)$ for every disjoint
coalitions $S,T$.  

For any game $v$, its \emph{M\"obius transform} \cite{rot64} (or Harsanyi
dividend \cite{har63}) is a set function $m:2^N\rightarrow
\bR$ defined by
\[
m(S) :=\sum_{T\subseteq S}(-1)^{|S\setminus T|}v(T), \quad \forall S\subseteq N.
\] 
If $m$ is given, it is possible to recover $v$ by $v(S) = \sum_{T\subseteq S}m(T)$.
A game $v$ is \emph{$k$-additive} \cite{gra96f} if its M\"obius transform vanishes for sets of
more that $k$ players: $m(S)=0$ if $|S|>k$, and there exists at least one
$S\subseteq N$ of $k$ players such that $m(S)\neq 0$. Note that a 1-additive
game is an additive game.

We introduce various sets:
\begin{enumerate}
\item  The set of all games with player set
$N$: $\cG(N)=\bR^{2^n-1}$;
\item The set of all monotonic games with player set
$N$: $\cMG(N)$;
\item The set of all at most $k$-additive games $\cG^k(N)=\bR^{\eta(k)}$, and at most
  $k$-additive monotonic games $\cMG^k(N)$, with
  $\eta(k):=\sum_{\ell=1}^k\binom{n}{\ell}$. Note that $\cG^1(N)$ denotes the
  set of additive games.
\item The set of \emph{selectors} on $N$:
\[
\cA(N):=\{\alpha:2^N\setminus \{\emptyset\}\rightarrow N, \quad S\mapsto\alpha(S)\in S\};
\]
\item The set of \emph{sharing functions} on $N$: 
\[
\cQ(N)=\Big\{q:2^N\setminus\{\emptyset\}\times N\rightarrow[0,1]\mid q(K,i)=0
  \text{ if } i\not\in K, \ \ \sum_{i\in K}q(K,i)=1,\quad\emptyset\neq
  K\subseteq N\Big\}.
\]
If $q$ is such that for all $K$ there exists $i\in K$ such that $q(K,i)=1$, then
$q$ is a selector. Conversely, any selector  can be viewed as a sharing
function. Moreover, $\cQ(N)$ is a convex polyhedron whose vertices are the selectors.
\end{enumerate}

Next, we introduce various mappings defined on $\cG(N)$ (or any subset like
$\cMG(N)$, $\cG^k(N)$, etc.). Most of the following notions are usually not
considered as mappings, but it is very convenient here to do so.
\begin{enumerate}
\item The \emph{preimputation set} $\PI:\cG(N)\rightarrow 2^{\bR^n}$
\[
\PI(v):=\{x\in \mathbb{R}^{N}\mid x(N)=v(N)\};
\]
\item The \emph{imputation set}  $\I:\cG(N)\rightarrow 2^{\bR^n}$
\[
\I(v):=\{x\in \mathbb{R}^{N}\mid x(N)=v(N)\ \text{and}\  x_{i}\geq
v(\{i\}) \ \text{for all}\  i\in N\};
\]
\item The \emph{core} $\core:\cG(N)\rightarrow 2^{\bR^n}$
\begin{align*}
\core(v)& :=\{x\in\bR^n\mid x(S)\geq v(S), \ \ \forall S\subset N, \text{ and }
 x(N)=v(N)\}\\ & = \{\phi\in\cG^1(N)\mid \phi(S)\geq v(S), \ \ \forall S\subset
 N,\text{ and } \phi(N)=v(N)\};
\end{align*}
\item The \emph{monotonic core} $\mcore:\cG(N)\rightarrow 2^{\bR^n}$
\[
\mcore(v):=\{\phi\in\cMG^1(N)\mid \phi(S)\geq v(S), \ \ \forall S\subset N,\text{
and } \phi(N)=v(N)\};
\]
\item The \emph{positive core} $\core_+:\cG(N)\rightarrow 2^{\bR^n_+}$
\[
\core_+(v):=\{x\in\bR^n_+\mid x(S)\geq v(S), \ \ \forall S\subset N,\text{
and } x(N)=v(N)\};
\]
\item The \emph{$k$-additive core} $\core^k:\cG(N)\rightarrow 2^{\cG^k(N)}$
\[
\core^k(v):=\{\phi\in\cG^k(N)\mid \phi(S)\geq v(S), \ \ \forall S\subset
N,\text{ and } \phi(N)=v(N)\},
\]
and similarly the $k$-additive monotonic core $\mcore^k$ and the $k$-additive
positive core $\core^k_+$;
\item The \emph{selector value} $x^\alpha:\cG(N)\rightarrow \bR^n$ for any
  selector $\alpha\in \cA(N)$
\[
x^\alpha_i(v):=\sum_{S\mid \alpha(S)=i}m^v(S), \quad i\in N
\]
where $m^v$ is the M\"obius transform of $v$;
\item The \emph{sharing value} $x^q:\cG(N)\rightarrow \bR^n$ for any sharing
function $q\in\cQ(N)$:
\[
x^q_i(v):=\sum_{S\ni i}q(S,i)m^v(S), \quad i\in N;
\] 
\item The \emph{selectope} $\S:\cG(N)\rightarrow 2^{\bR^n}$
\[
\S(v):=\conv\{x^\alpha(v)\mid \alpha\in\cA(N)\}=\{x^q(v)\mid q\in\cQ(N)\}.
\]
We may write
\[
\S=\bigcup_{q\in \cQ(N)}x^q.
\]
\item The \emph{marginal value} $p^\sigma:\cG(N)\rightarrow\bR^n$, with
  $\sigma\in\gS(N)$, the set of permutations on $N$
\[
p^\sigma_{\sigma(i)}(v):=v(S_i) - v(S_{i-1}), \quad i\in N, 
\]
where $S_i:=\{\sigma(1),\ldots,\sigma(i)\}$. Each marginal value is a selector
value (and hence a sharing value): the selector corresponding to $p^\sigma$ is
$\alpha$ which selects in $S$ the player of maximal rank\footnote{We adopt the
  convention: $i$ is the rank, and $\sigma(i)$ the player of rank $i$.}. 
\item The \emph{Weber set} $\W:\cG(N)\rightarrow 2^{\bR^n}$
\[
\W(v) = \conv\{p^\sigma(v)\mid \sigma\in\gS(N)\}.
\]
From the above remark, we have $\W(v)\subseteq \S(v)$ for any game $v$.
\item The Shapley value $\sh:\cG(N)\rightarrow \bR^n$. Since $\sh(v)\in \S(v)$,
  we write $\sh\in \S$ (particular sharing value). 
\end{enumerate}

An element $\phi$ of the $k$-additive core is a $k$-additive game. It induces by its
M\"obius transform $m^\phi$ a preimputation on all coalitions of at most $k$
players, since by definition $m^\phi(S)=0$ for all $S\subseteq N$ such that
$|S|>k$, and $\sum_{S\in\cP^k(N)}m^\phi(S)=v(N)$. We call such a (generalized)
imputation a \emph{$k$-order preimputation}. Note that in general, $m^\phi$ need
not be positive everywhere.

It remains to derive from a given $k$-order preimputation $m^\phi$ a classical
preimputation $x$, by sharing for every coalition $S\in\cP^k(N)$ the amount
$m^\phi(S)$ among players in $S$, i.e., by using a sharing function $q\in \cQ$
applied on $m^\phi$. In other words, any preimputation obtained from $m^\phi$ is
a sharing value $x^q(\phi)$ for some $q\in \cQ$, and vice-versa. It follows
that the set of preimputations derived from an element $\phi$ of the
$k$-additive core   is the selectope $\S(\phi)$. In short, the set of
preimputations which can be derived from the $k$-additive core is $\S(\core^k(v))$.

\section{Basic results and facts on the $k$-additive core}\label{sec:bare}
The $k$-additive core is a polyhedron of dimension $\sum_{i=1}^k\binom{n}{i}-1$,
possibly unbounded. A study of its vertices has been done in \cite{grmi07}, with
results similar to the Shapley-Ichiishi result for convex games. By contrast,
the monotonic $k$-additive core is always bounded, but has many more vertices
than the $k$-additive core, and it seems quite difficult to study them.

A noticeable fact shown in \cite{migr05} is that $\core^k(v)\neq\emptyset$ for
any game in $\cG(N)$, as soon as $k\geq 2$. However, this property does not hold
for the monotonic $k$-additive core. Exact conditions for nonemptiness are given
in \cite{migr05}; we call \emph{$k$-balanced-monotone} a game $v$ such that
$\mcore^k(v)\neq\emptyset$. 

We prove some elementary facts concerning the polytope $\mcore^k(v)$ and the
convex part of the $k$-additive core.
\begin{proposition}\label{prop:1}
For every $k$-balanced-monotone game $v$ on $N$, any $2\leq k\leq n$, any
sharing value $x^q$ we have:
\begin{gather*}
\ext(x^q(\mcore^k(v)))\subseteq x^q(\ext(\mcore^k(v)))\\
x^q(\mcore^k(v)) =\conv(x^q(\ext(\mcore^k(v)))).
\end{gather*}
The results holds also if $\mcore^k$ is replaced by $\conv(\ext(\core^k))$.
\end{proposition}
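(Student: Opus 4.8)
The plan is to exploit the fact that $x^q$ is a linear map. Observe first that for any fixed sharing function $q\in\cQ(N)$, the sharing value $x^q:\cG(N)\to\bR^n$ is linear in the game (equivalently, in the M\"obius transform): each component $x^q_i(v)=\sum_{S\ni i}q(S,i)m^v(S)$ is a linear combination of the coordinates $m^v(S)$, and the M\"obius transform is itself a linear bijection of $\cG(N)$. Hence $x^q$ restricted to the affine subspace $\cG^k(N)$ is an affine (indeed linear) map. Since $\mcore^k(v)$ is, by the remarks of Section~\ref{sec:bare}, a bounded polytope (it is the intersection of $\cMG^k(N)$ with the affine inequalities $\phi(S)\geq v(S)$ and the equality $\phi(N)=v(N)$), its image under the affine map $x^q$ is again a bounded polytope.

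First I would invoke the standard fact from polyhedral geometry: if $P$ is a polytope and $T$ an affine map, then $T(P)$ is a polytope, $T(P)=\conv(T(\ext(P)))$, and $\ext(T(P))\subseteq T(\ext(P))$. The second inclusion is immediate because $T(P)=\conv(T(\ext(P)))$ shows every point of $T(P)$ is a convex combination of points of $T(\ext(P))$, so any extreme point of $T(P)$ must already lie in $T(\ext(P))$; and the identity $T(P)=\conv(T(\ext(P)))$ follows from $P=\conv(\ext(P))$ together with the fact that affine maps commute with convex combinations. Applying this with $P=\mcore^k(v)$ and $T=x^q$ gives exactly the two displayed assertions.

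For the final sentence, the same argument applies verbatim with $P=\conv(\ext(\core^k(v)))$: although $\core^k(v)$ itself may be unbounded, $\conv(\ext(\core^k(v)))$ is by construction the convex hull of the (finitely many) vertices of the polyhedron $\core^k(v)$, hence a polytope, and $\ext(\conv(\ext(\core^k(v))))=\ext(\core^k(v))$, so the polytope lemma applies once more. One should check that $\core^k(v)$ does have extreme points, i.e. contains no line; this holds because $\core^k(v)$ is contained in the affine hyperplane $\phi(N)=v(N)$ and is pointed thanks to the constraints $\phi(S)\ge v(S)$ (a line in the lineality space would have to preserve all these inequalities, forcing it to be trivial on each coordinate) — but this is already implicitly used when one speaks of $\ext(\core^k)$ in the statement, so it can be taken for granted.

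The only genuine point requiring care — and the one I would spell out — is the boundedness of $\mcore^k(v)$, which is asserted in the preceding paragraph of the paper and can be cited, together with the passage from "image of a polytope" to the two set identities. Everything else is the routine polyhedral-geometry lemma recalled above; I would state that lemma once, apply it to the two cases, and be done. No delicate estimate or combinatorial obstruction arises here — the content is entirely the linearity of $x^q$ plus the boundedness of the domain polytope.
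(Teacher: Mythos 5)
Your proof is correct and follows essentially the same route as the paper: both arguments rest on the linearity of $x^q$, the boundedness of $\mcore^k(v)$, and the standard polyhedral facts that an affine image of a polytope is a polytope with $T(P)=\conv(T(\ext(P)))$ and $\ext(T(P))\subseteq T(\ext(P))$ (you merely derive these two facts in the opposite order from the paper). Your extra remarks on why $\conv(\ext(\core^k(v)))$ is itself a polytope and on the pointedness of $\core^k(v)$ are fine and only make explicit what the paper leaves implicit.
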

\begin{proof}
It is well known from the theory of polyhedra that if $P$ is a polytope in
$\bR^m$ and $\rho$ is a linear mapping from $\bR^m$ to $\bR^p$, $m\geq p$, then
$\rho(P)$ is a polytope. Moreover, a vertex $y$ of $\rho(P)$ is necessarily the
image from a vertex of $P$ (indeed, suppose that no $x$ such that $\rho(x)=y$ is a
vertex. Then it exists $x_1,x_2\in P$, $\alpha\in\left]0,1\right[$ such that
    $x=\alpha x_1+(1-\alpha)x_2$. By linearity, we get
    $y=\rho(x)=\alpha\rho(x_1)+(1-\alpha)\rho(x_2)$, contradicting the fact that
    $y$ is a vertex of $\rho(P)$), but the converse does not hold in general.

Since $\mcore^k(v)$ is a polytope and $x^q$ is linear, the first relation
holds by the above fact. Now, since $\conv(\ext(P))=P$, we get
\[
x^q(\mcore^k(v)) = \conv(\ext(x^q(\mcore^k(v))))=\conv(x^q(\ext(\mcore^k(v)))),
\]
the second equality coming from the inclusion relation.
\end{proof}
By Proposition~\ref{prop:1} and the fact that $\S=\bigcup_{q\in\cQ(N)} x^q$, we
get:
\begin{corollary}\label{cor:1}
For every $k$-balanced-monotone game $v$ on $N$ and any $2\leq k\leq n$, we have:
\[
\S(\mcore^k(v))\subseteq \conv(\S(\ext(\mcore^k(v)))).
\]
The results holds also if $\mcore^k$ is replaced by $\conv(\ext(\core^k))$.
\end{corollary}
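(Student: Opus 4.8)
The plan is to derive Corollary~\ref{cor:1} directly from Proposition~\ref{prop:1} by taking unions over all sharing functions $q\in\cQ(N)$. Recall that by definition $\S(w)=\bigcup_{q\in\cQ(N)}x^q(w)$ for any game $w$, and that for a set $A$ of games we write $\S(A)=\bigcup_{w\in A}\S(w)=\bigcup_{q\in\cQ(N)}x^q(A)$, where $x^q(A):=\{x^q(w)\mid w\in A\}$.

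First I would fix $q\in\cQ(N)$ and invoke the second identity of Proposition~\ref{prop:1}, namely $x^q(\mcore^k(v))=\conv(x^q(\ext(\mcore^k(v))))$. Since $\ext(\mcore^k(v))$ is a finite set of games, $x^q(\ext(\mcore^k(v)))\subseteq\bigcup_{\phi\in\ext(\mcore^k(v))}x^q(\phi)\subseteq\S(\ext(\mcore^k(v)))$, where the last inclusion holds because $x^q(\phi)\in\S(\phi)$ for every single game $\phi$. Hence $x^q(\mcore^k(v))=\conv(x^q(\ext(\mcore^k(v))))\subseteq\conv(\S(\ext(\mcore^k(v))))$ for every $q$.

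Next I would take the union over $q\in\cQ(N)$ on the left-hand side: $\S(\mcore^k(v))=\bigcup_{q\in\cQ(N)}x^q(\mcore^k(v))\subseteq\conv(\S(\ext(\mcore^k(v))))$, the inclusion being preserved under union since the right-hand side does not depend on $q$. This gives exactly the claimed inclusion. The statement for $\conv(\ext(\core^k))$ in place of $\mcore^k$ follows verbatim, using the corresponding case of Proposition~\ref{prop:1}, since the only property of $\mcore^k(v)$ used is that it is a polytope with the stated extreme-point behaviour under the linear maps $x^q$.

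There is essentially no obstacle here: the work has already been done in Proposition~\ref{prop:1}, and the corollary is a formal consequence of writing the selectope as a union of sharing values. The only point requiring a moment's care is the observation that $x^q(\ext(\mcore^k(v)))\subseteq\S(\ext(\mcore^k(v)))$ — i.e. that passing from a single sharing value $x^q(\phi)$ of an extreme game $\phi$ to the full selectope $\S(\phi)$ only enlarges the set — which is immediate from $\S(\phi)=\bigcup_{q'\in\cQ(N)}x^{q'}(\phi)\ni x^q(\phi)$; after that, monotonicity of $\conv$ and of unions under inclusion finishes the argument.
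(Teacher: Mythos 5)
Your proof is correct and follows exactly the route the paper intends: the paper derives the corollary immediately from Proposition~\ref{prop:1} together with the identity $\S=\bigcup_{q\in\cQ(N)}x^q$, which is precisely the union-over-$q$ argument you spell out. No gaps; you have simply made the paper's one-line deduction explicit.
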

Equality holds if $\S(\mcore^k(v))$ is a convex set, which does not seem to be
true in general.

\begin{proposition}\label{prop:3}
Suppose that $\core(v)\neq\emptyset$. Then for any $2\leq k\leq n$ and any
sharing value~$x^q$
\[
x^q(\core^k(v))\supseteq \core(v).
\]
The same result holds with the monotonic core.
\end{proposition}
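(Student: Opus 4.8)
The plan is to show that every core element, viewed as an additive game, already lies in the image $x^q(\core^k(v))$ by exhibiting a suitable preimage inside $\core^k(v)$. So fix $x\in\core(v)$; I will identify $x$ with the additive game $\phi_x\in\cG^1(N)\subseteq\cG^k(N)$ defined by $\phi_x(S)=x(S)$ for all $S$. Since $\core(v)\neq\emptyset$, this $\phi_x$ satisfies $\phi_x(S)\geq v(S)$ for all $S\subset N$ and $\phi_x(N)=v(N)$, hence $\phi_x\in\core^k(v)$ for every $k\geq 1$. The only thing to check is that some sharing function $q$ sends $\phi_x$ back to $x$, i.e.\ that $x^q(\phi_x)=x$.

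The key computation is to evaluate $x^q(\phi_x)$ for an \emph{arbitrary} sharing function $q$. The M\"obius transform of the additive game $\phi_x$ is concentrated on singletons: $m^{\phi_x}(\{i\})=x_i$ and $m^{\phi_x}(S)=0$ whenever $|S|\geq 2$. Plugging this into the definition $x^q_i(\phi)=\sum_{S\ni i}q(S,i)m^\phi(S)$, all terms with $|S|\geq 2$ vanish, leaving $x^q_i(\phi_x)=q(\{i\},i)\,m^{\phi_x}(\{i\})=1\cdot x_i=x_i$, since $q(\{i\},i)=1$ by the normalisation constraint $\sum_{j\in\{i\}}q(\{i\},j)=1$. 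Thus $x^q(\phi_x)=x$ for \emph{every} $q\in\cQ(N)$, so in particular for the given one. This proves $x\in x^q(\core^k(v))$, and since $x\in\core(v)$ was arbitrary, $x^q(\core^k(v))\supseteq\core(v)$.

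For the monotonic variant, one replaces $\core$ by $\mcore$ and $\core^k$ by $\mcore^k$. Given $x\in\mcore(v)$, by definition the associated additive game $\phi_x$ is monotone (it lies in $\cMG^1(N)$), dominates $v$, and satisfies $\phi_x(N)=v(N)$; since a monotone additive game is in particular a monotone $k$-additive game, $\phi_x\in\mcore^k(v)$. The same computation as above gives $x^q(\phi_x)=x$, hence $x^q(\mcore^k(v))\supseteq\mcore(v)$.

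There is essentially no obstacle here: the statement is almost a tautology once one observes that $\core(v)$ embeds into $\core^k(v)$ via the identification of additive games with their imputation vectors, and that every sharing function acts as the identity on singleton-supported M\"obius transforms. The only mild point worth stating carefully is this embedding $\core(v)=\core^1(v)\subseteq\core^k(v)$ together with the invariance of additive games under all sharing rules; after that the inclusion is immediate. One might add a remark that the inclusion is typically strict, since for $k\geq 2$ the $k$-additive core contains genuinely $k$-additive games whose selectope can reach preimputations far outside $\core(v)$ — indeed this is the phenomenon the later sections exploit.
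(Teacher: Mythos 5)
Your argument is correct and is exactly the paper's proof, just written out in detail: the paper also argues that $\core(v)\subseteq\core^k(v)$ via the identification of core elements with additive games, and that any sharing value acts as the identity on such games because their M\"obius transform is supported on singletons. Nothing to add.
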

\begin{proof}
Clear from the fact that $\core^k(v)\supseteq \core(v)$, and that
$\phi\in\core(v)$ (considered as an element of $\cG^1(N)$) implies $x^q(\phi)=\phi$.
\end{proof}
\begin{remark}
\begin{enumerate}
\item We have $\mcore=\core_+$. Indeed, an element of the positive core is
  monotone. Conversely, take $x$ in the monotonic core. It implies that for any
  $S\subset N$ and $i\not\in S$, we have $x(S\cup i)-x(S)=x(i)\geq
  0$. Obviously, this is no more true for the $k$-additive core. 
\item It is easy to find that $v(\{i\})\geq 0$ for all $i\in N$ is a sufficient
  condition for ensuring $\core_+=\core$ (and therefore $\mcore=\core$). But a
  similar condition for the $k$-additive case seems to be hard to find. Clearly,
  if $v^*$ dominates $v$, the monotonicity of $v$ does not imply the
  monotonicity of $v^*$ in general.
\end{enumerate}
\end{remark}

\section{Preimputations induced by the $k$-additive core}\label{sec:kadd}
We begin by recalling the following result on systems of inequalities.
\begin{lemma}\label{lem:1}
Consider the system of linear inequalities
\begin{align*}
\sum_{j=1}^n a_{ij}x_j &\leq b_i, \quad (i\in I)\\
\sum_{j=1}^n a'_{ij}x_j &= b'_i, \quad (i\in E).
\end{align*}
Consider a given $j_0\in\{1,\ldots,n\}$ such that $a'_{ij_0}=0$ for all $i\in
E$, and define $I_0:=\{i\in I\mid a_{ij_0}=0\}$ (possibly empty).  If all
$a_{ij_0}$, $i\in I\setminus I_0$, have the same sign, then the above system is
equivalent to
\begin{align*}
\sum_{j=1}^n a_{ij}x_j & \leq b_i, \quad (i\in I_0)\\
\sum_{j=1}^n a'_{ij}x_j & = b'_i, \quad (i\in E).
\end{align*}
\end{lemma}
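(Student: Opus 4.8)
The plan is to show that deleting the variable $x_{j_0}$ from the constraints that actually involve it does not change the solution set, by a Fourier–Motzkin-type elimination argument that is trivial in this special case. First I would observe that the hypothesis says $x_{j_0}$ does not appear in any equality constraint, and appears in the inequality constraints only for $i\in I\setminus I_0$, where moreover all coefficients $a_{ij_0}$ share a common sign. Without loss of generality I would assume this common sign is positive (otherwise replace $x_{j_0}$ by $-x_{j_0}$, which is a legitimate change of variable since $j_0$ is not constrained elsewhere, or equivalently negate both sides of those inequalities). Then each inequality in $I\setminus I_0$ can be rewritten as $x_{j_0}\le \bigl(b_i-\sum_{j\neq j_0}a_{ij}x_j\bigr)/a_{ij_0}$, i.e.\ it is purely an \emph{upper} bound on $x_{j_0}$, with no lower bound coming from anywhere in the system.

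The key step is the elimination itself. One direction is immediate: any solution of the original system is in particular a solution of the reduced system, since the reduced system is obtained by discarding constraints. For the converse, suppose $(x_j)_{j\neq j_0}$ satisfies the reduced system (the inequalities indexed by $I_0$ and all the equalities, none of which mention $x_{j_0}$). I then need to produce a value for $x_{j_0}$ satisfying the discarded inequalities. Since after normalization every discarded inequality is an upper bound $x_{j_0}\le c_i$ with $c_i:=\bigl(b_i-\sum_{j\neq j_0}a_{ij}x_j\bigr)/a_{ij_0}$, it suffices to take $x_{j_0}:=\min_{i\in I\setminus I_0}c_i$ (or any smaller number; if $I\setminus I_0=\emptyset$ the variable is free and any value works). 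This choice, together with the given $(x_j)_{j\neq j_0}$, satisfies every constraint of the original system, establishing equivalence of the two systems.

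I do not anticipate a genuine obstacle here; the only points requiring a little care are bookkeeping ones. One must be explicit that the common-sign hypothesis is exactly what guarantees the eliminated variable receives only one-sided bounds, so that the feasible interval for $x_{j_0}$ is never empty — this is the crux of why Fourier–Motzkin elimination collapses to mere deletion in this case. One should also note that when $a_{ij_0}<0$ the inequality flips to a lower bound and one takes a maximum instead; either way a feasible $x_{j_0}$ exists. Finally, the statement as given does not even retain the information that a compatible $x_{j_0}$ exists (it simply drops the variable from the description), so strictly speaking the claimed ``equivalence'' is of the two systems viewed as conditions on $(x_j)_{j\neq j_0}$, and the argument above is exactly what certifies that reading.
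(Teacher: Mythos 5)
Your proof is correct and follows essentially the same route as the paper, which likewise notes that (after assuming w.l.o.g.\ $a_{ij_0}>0$) the discarded inequalities can always be met by choosing $x_{j_0}$ sufficiently negative, i.e.\ a degenerate Fourier--Motzkin elimination with only one-sided bounds. Your explicit choice $x_{j_0}=\min_{i\in I\setminus I_0} c_i$ and the remark that the equivalence is to be read on the variables $(x_j)_{j\neq j_0}$ merely spell out what the paper leaves implicit.
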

\begin{proof}
This result may be deduced from the Fourier-Motzkin elimination. Otherwise,
simply remark the following: suppose w.l.o.g. that $a_{ij_0}> 0$ for all $i\in
I\setminus I_0$. Then, any inequality $\sum_{j=1}^n a_{ij}x_j \leq b_i, \quad
(i\in I\setminus I_0)$ will be satisfied for any
$x_1,\ldots,x_{j_0-1},x_{j_0+1},\ldots,x_n$, for a sufficiently negatively large
value of $x_{j_0}$.
\end{proof}
This observation is the key for the next theorem.
\begin{theorem}\label{th:1}
For any $x^q$, $q\in\cQ(N)$ such that $q(K,i)>0$ for all $K\subseteq N$ and
$i\in K$, for any $v\in\cG(N)$, for any $2\leq k\leq n$, we have
\[
x^q(\core^k(v)) = \PI(v).
\]
Therefore, $\S\circ\core^k=\PI$.
\end{theorem}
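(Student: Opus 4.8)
The inclusion $x^q(\core^k(v)) \subseteq \PI(v)$ is immediate: if $\phi \in \core^k(v)$ then $\sum_{S \in \cP^k(N)} m^\phi(S) = \phi(N) = v(N)$, and since $q$ is a sharing function, $x^q(\phi)(N) = \sum_i \sum_{S \ni i} q(S,i) m^\phi(S) = \sum_S m^\phi(S) \sum_{i \in S} q(S,i) = \sum_S m^\phi(S) = v(N)$, so $x^q(\phi) \in \PI(v)$. The real content is the reverse inclusion $\PI(v) \subseteq x^q(\core^k(v))$: given an arbitrary $x \in \bR^n$ with $x(N) = v(N)$, I must produce a $k$-additive game $\phi$ that dominates $v$ (i.e. $\phi(S) \geq v(S)$ for all $S \subset N$, $\phi(N) = v(N)$) and whose sharing value $x^q(\phi)$ equals $x$.

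The plan is to take $k = 2$ (which suffices, since $\core^2(v) \subseteq \core^k(v)$ for all $k \geq 2$, as a $2$-additive game is in particular $k$-additive — or rather, one works in $\cG^k(N)$ and sets the higher Möbius coefficients to zero) and to parametrize $\phi$ by its Möbius transform $m = m^\phi$, supported on singletons and pairs. The constraints are: (a) $\sum_{i} m(i) + \sum_{\{i,j\}} m(\{i,j\}) = v(N)$; (b) for each $i$, the sharing-value equation $x^q_i(\phi) = m(i) + \sum_{j \neq i} q(\{i,j\},i)\, m(\{i,j\}) = x_i$; (c) the domination inequalities $\phi(S) = \sum_{T \subseteq S} m(T) \geq v(S)$ for all $\emptyset \neq S \subset N$. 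Note that (b) for all $i$ together with $x(N) = v(N)$ automatically implies (a), so (a) can be dropped. The idea is now to use the pair variables $m(\{i,j\})$ as free "slack" directions: fix, say, $m(i) := x_i - \sum_{j \neq i} q(\{i,j\},i) m(\{i,j\})$ so that (b) holds identically, substitute into the domination inequalities (c), and observe that for a coalition $S$ with $|S| \geq 2$, a pair $\{i,j\} \subseteq S$ contributes a term $m(\{i,j\})$ directly plus the induced terms $-q(\{i,j\},i)m(\{i,j\}) - q(\{i,j\},j)m(\{i,j\}) = -m(\{i,j\})$ from $m(i), m(j) \in S$ — so these cancel, whereas for a pair $\{i,j\} \not\subseteq S$, at most one endpoint lies in $S$ and the coefficient of $m(\{i,j\})$ in $\phi(S)$ is $-q(\{i,j\},i)$ or $-q(\{i,j\},j)$ (or $0$), which is strictly negative whenever that endpoint is in $S$, by the hypothesis $q(K,i) > 0$. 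This is exactly the sign pattern needed to apply Lemma~\ref{lem:1}: the variable $m(\{i,j\})$ appears in the equality constraints (b) with coefficient zero once we have eliminated $m(i)$ (it does not — we eliminated $m(i)$, not $m(\{i,j\})$; more carefully, after eliminating all singleton variables there are no equality constraints left), and in the inequalities (c) it has a fixed sign ($\leq 0$ as a coefficient, i.e. one can always satisfy the inequality by pushing $m(\{i,j\})$ toward $-\infty$), unless $\{i,j\} \subseteq S$, in which case the coefficient is $0$.

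Concretely, I would: first reduce to $k = 2$ and set up the Möbius parametrization; second, eliminate the singleton Möbius coefficients via the sharing-value equations, leaving a system purely in the pair variables $m(\{i,j\})$ consisting only of inequalities $\phi(S) \geq v(S)$, $S \subset N$; third, compute the coefficient of each $m(\{i,j\})$ in each $\phi(S)$ and check it is $\leq 0$, with equality iff $\{i,j\} \subseteq S$ — this uses $q(K,\cdot) > 0$ essentially; fourth, apply Lemma~\ref{lem:1} repeatedly (once per pair variable) to delete, for each pair $\{i,j\}$, all the inequalities in which $m(\{i,j\})$ has a nonzero coefficient, i.e. all $S$ with $\{i,j\} \not\subseteq S$ and $S \cap \{i,j\} \neq \emptyset$; fifth, observe that after eliminating all $\binom{n}{2}$ pair variables the only surviving inequalities are those $S$ for which \emph{every} pair meeting $S$ is contained in $S$ — and for $|S| < n$ with $|S| \geq 1$ there is always a pair $\{i,j\}$ with $i \in S$, $j \notin S$, so in fact \emph{no} inequality survives. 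Hence the reduced system is trivially consistent (it is empty), so a solution $m$ exists, giving the desired $\phi$; finally conclude $\S \circ \core^k = \PI$ since the displayed equality holds for the sub-class of sharing functions with all weights positive, and $\PI(v) = x^q(\core^k(v)) \subseteq \S(\core^k(v)) = \bigcup_{q'} x^{q'}(\core^k(v)) \subseteq \PI(v)$, the last inclusion by the easy direction applied to every $q'$.

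The main obstacle is the bookkeeping in steps three and four: one must verify the sign condition of Lemma~\ref{lem:1} holds at \emph{every} stage of the successive elimination, i.e. that eliminating one pair variable does not spoil the sign pattern of the coefficients of the remaining pair variables in the surviving inequalities. This is true because eliminating $m(\{i,j\})$ only deletes inequalities, never modifies the surviving ones (Lemma~\ref{lem:1} replaces the system by a sub-system with the same remaining rows), so the coefficients of the other variables are untouched — but this point should be stated explicitly. A secondary subtlety is making sure the argument genuinely needs only $k = 2$ and that the claim "no inequality survives" is correct for $S = \emptyset$ (excluded, $\phi(\emptyset) = 0 = v(\emptyset)$ automatically) and handles the grand coalition $N$ (which appears in the equality $\phi(N) = v(N)$, already absorbed into the sharing equations plus $x(N) = v(N)$, hence not among the inequalities).
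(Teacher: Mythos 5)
Your proposal is correct and follows essentially the same route as the paper's proof: express the core constraints in terms of the M\"obius transform, use the sharing equations to eliminate the singleton coefficients, observe that the remaining higher-order coefficients then appear with nonpositive coefficients (strictly negative exactly when $K$ straddles $S$, which uses $q>0$), and invoke Lemma~\ref{lem:1} to discard every inequality for proper coalitions, leaving only $x(N)=v(N)$. The only differences are cosmetic: you split the equality into two inclusions and restrict the construction to $k=2$ pair variables, whereas the paper treats general $k$ and obtains both inclusions at once as a projection computation.
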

\begin{proof}
Take any $v\in\cG(N)$, and any $k\geq 2$. Since $\core^k(v)\neq\emptyset$, for any
$v^*\in\core^k(v)$ with M\"obius transform $m^*$, we have:
\begin{align*}
m^*(i) & \geq v(i), \quad i\in N\\
\sum_{\substack{K\subseteq S\\ |K|\leq k}}m^*(K) &\geq v(S),\quad S\subset N,
|S|>1\\
\sum_{\substack{K\subseteq N\\ |K|\leq k}}m^*(K) & =  v(N).
\end{align*}  
Take any $x^q$ with sharing system $q\in \cQ(N)$ and write for simplicity
$x:=x^q(v^*)$. We have by definition
\[
x_i = m^*(i) + \sum_{\substack{K\ni i\\2\leq |K|\leq k}}q(K,i)m^*(K).
\]
Then
\[
m^*(i) = x_i - \sum_{\substack{K\ni i\\2\leq |K|\leq k}}q(K,i)m^*(K).
\] 
Replacing in the above system, we get:
\begin{align*}
x_i - \sum_{\substack{K\ni i\\2\leq |K|\leq k}}q(K,i)m^*(K) & \geq v(i), \quad i\in N\\
\sum_{i\in S}x_i - \sum_{i\in S}\sum_{\substack{K\ni i\\2\leq|K|\leq
    k}}q(K,i)m^*(K) + \sum_{\substack{K\subseteq S\\2\leq|K|\leq k}}m^*(K) &\geq v(S),\quad S\subset N,
|S|>1\\
\sum_{i\in N}x_i - \sum_{i\in N}\sum_{\substack{K\ni i\\2\leq|K|\leq
    k}}q(K,i)m^*(K) + \sum_{\substack{K\subseteq N\\2\leq|K|\leq k}}m^*(K) &= v(N).
\end{align*}  
The second line becomes
\[
\sum_{i\in S}x_i -\sum_{\substack{K\subseteq S\\2\leq|K|\leq
  k}}m^*(K)\underbrace{\sum_{i\in K}q(K,i)}_{=1} - \sum_{\substack{K\cap S\neq
  \emptyset\\K\not\subseteq S\\2\leq|K|\leq k}}m^*(K)\sum_{i\in K\cap S}q(K,i) +
  \sum_{\substack{K\subseteq S\\2\leq|K|\leq k}}m^*(K)\geq v(S)
\]
or
\[
\sum_{i\in S}x_i - \sum_{\substack{K\cap S\neq
  \emptyset\\K\not\subseteq S\\2\leq|K|\leq k}}m^*(K)\sum_{i\in K\cap
  S}q(K,i)\geq v(S).
\]
Therefore the system becomes:
\begin{align*}
x_i - \sum_{\substack{K\ni i\\2\leq |K|\leq k}}q(K,i)m^*(K) & \geq v(i), \quad
i\in N\\
\sum_{i\in S}x_i - \sum_{\substack{K\cap S\neq
  \emptyset\\K\not\subseteq S\\2\leq|K|\leq k}}m^*(K)\sum_{i\in K\cap
  S}q(K,i) &\geq v(S),\quad S\subset N,
|S|>1\\
\sum_{i\in N}x_i & = v(N).
\end{align*}
Now, by positivity of $q$, applying Lemma~\ref{lem:1} successively to all $m^*(K)$, it remains only
\[
\sum_{i\in N}x_i  = v(N).
\]
\end{proof}
\begin{remark}
$x^q=\sh$ fulfills the condition, hence $\sh(\core^k(v))=\PI(v)$. It can
  be interpreted by saying that any preimputation can be seen as the Shapley
  value of some element (in general not unique) of the 2-additive core.
\end{remark}
We turn now to the case of selector values.
\begin{theorem}\label{th:3}
Let $\alpha\in\cA(N)$ be a selector and $x^\alpha$ the corresponding selector
value. Then, for any $2\leq k\leq n$ we have
\[
x^\alpha(\core^k(v)) = \{x\in\PI(v)\mid x(S)\geq v(S),\quad \forall S\subset N, S\not\in\cC(\alpha)\}
\]
where $\cC(\alpha):=\{S\subset N\mid \exists K\subseteq N, 2\leq |K|\leq n, K\cap
S\neq \emptyset,K\not\subseteq S,\alpha(K)\in S\}$.
\end{theorem}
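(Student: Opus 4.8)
The plan is to mimic the computation in the proof of Theorem~\ref{th:1}, carrying out the same substitution $m^*(i)=x_i-\sum_{K\ni i,\,2\le|K|\le k}q(K,i)m^*(K)$ but now with $q$ replaced by the $0/1$ sharing function associated with the selector $\alpha$, i.e.\ $q(K,i)=1$ iff $i=\alpha(K)$. After rewriting the defining inequalities of $\core^k(v)$ in the variables $x_1,\dots,x_n$ and the free variables $m^*(K)$ ($2\le|K|\le k$), I expect to reach exactly the system displayed in the proof of Theorem~\ref{th:1}, specialised to this $q$: namely $x(N)=v(N)$ together with, for each $S\subset N$ with $|S|>1$,
\[
\sum_{i\in S}x_i-\sum_{\substack{K\cap S\neq\emptyset\\ K\not\subseteq S\\ 2\le|K|\le k}}m^*(K)\,|\{i\in K\cap S:\alpha(K)=i\}|\ \ge\ v(S),
\]
and $x_i-\sum_{K\ni i,\,2\le|K|\le k}[\alpha(K)=i]\,m^*(K)\ge v(i)$ for each singleton. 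Since $\alpha(K)$ is a single player, the coefficient $|\{i\in K\cap S:\alpha(K)=i\}|$ is just $\1[\alpha(K)\in S]$; so the coefficient of $m^*(K)$ in the inequality for $S$ is nonzero precisely when $K\cap S\neq\emptyset$, $K\not\subseteq S$ and $\alpha(K)\in S$ — that is, precisely when $S\in\cC(\alpha)$ witnessed by $K$ (for the singleton inequalities, $S=\{i\}$, the condition $\alpha(K)\in S$ becomes $\alpha(K)=i$, matching the first line).

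Next I would eliminate the variables $m^*(K)$ one at a time using Lemma~\ref{lem:1}. Fix $K$ with $2\le|K|\le k$. The variable $m^*(K)$ does not appear in the equality $x(N)=v(N)$. In each inequality it appears with coefficient $-\1[\alpha(K)\in S]$ (taking $S=\{i\}$ for the singleton rows). These coefficients are all $\le 0$, hence all of the same sign, so Lemma~\ref{lem:1} applies and we may drop every inequality in which $m^*(K)$ actually occurs, i.e.\ every inequality indexed by some $S$ with $\alpha(K)\in S$, $K\cap S\neq\emptyset$, $K\not\subseteq S$. Doing this for all such $K$ successively, an inequality indexed by $S$ survives iff there is \emph{no} $K$ (with $2\le|K|\le n$, equivalently $2\le|K|\le k$ since $m^*(K)=0$ otherwise, but the set $\cC(\alpha)$ is defined with the harmless bound $|K|\le n$) satisfying $K\cap S\neq\emptyset$, $K\not\subseteq S$, $\alpha(K)\in S$ — i.e.\ iff $S\notin\cC(\alpha)$. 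Hence the projection onto the $x$-coordinates is exactly $\{x\in\PI(v)\mid x(S)\ge v(S)\ \forall S\subset N,\ S\notin\cC(\alpha)\}$, which is the claimed equality $x^\alpha(\core^k(v))=\{x\in\PI(v)\mid x(S)\ge v(S),\ \forall S\subset N,\ S\notin\cC(\alpha)\}$.

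There is one subtlety to check carefully, which I expect to be the only real obstacle: when eliminating successive variables $m^*(K)$, Lemma~\ref{lem:1} requires that the sign condition still hold for the \emph{reduced} system at each stage; but eliminating $m^*(K')$ only deletes inequalities (it never modifies the surviving ones, since the deleted constraints were those containing $m^*(K')$), so the coefficient of any not-yet-eliminated $m^*(K)$ in any surviving inequality is unchanged, and the sign condition is preserved throughout. One should also note that singleton sets $S=\{i\}$ are handled uniformly by the same argument, and that the case $|S|>1$ uses $\sum_{i\in K}q(K,i)=1$ exactly as in Theorem~\ref{th:1} to cancel the $\sum_{K\subseteq S}m^*(K)$ terms. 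With these remarks the proof is complete.
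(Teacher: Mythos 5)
Your proof follows the paper's argument essentially step for step: the same substitution $m^*(i)=x_i-\sum_{K\ni i}q(K,i)m^*(K)$, the same specialisation of $q$ to the $0/1$ sharing function of $\alpha$, and the same successive elimination of the variables $m^*(K)$ via Lemma~\ref{lem:1}; your explicit remark that each elimination only \emph{deletes} inequalities, so the sign condition persists for the remaining variables, is correct and is the point the paper leaves implicit. One caveat concerns your parenthetical claim that the bound $2\le|K|\le n$ in the displayed definition of $\cC(\alpha)$ is ``harmless'', i.e.\ interchangeable with $2\le|K|\le k$: this is false when $k<n$. For instance, with $n=3$, $k=2$, $\alpha(\{1,2,3\})=1$, $\alpha(\{1,2\})=2$, $\alpha(\{1,3\})=3$, the set $S=\{1\}$ belongs to $\cC(\alpha)$ only through the witness $K=\{1,2,3\}$; but $m^*(\{1,2,3\})=0$ for every element of $\core^2(v)$, so the inequality $x_1\ge v(1)$ survives the elimination and does constrain $x^\alpha(\core^2(v))$, whereas the right-hand side of the theorem as literally written would drop it. What your elimination actually establishes --- and what the paper's own proof establishes as well, since it too only ever invokes witnesses $K$ with $2\le|K|\le k$ --- is the statement with $\cC(\alpha)$ defined by witnesses of size at most $k$; the mismatch lies in the theorem's displayed definition of $\cC(\alpha)$ (where $n$ should presumably read $k$), not in your argument, but you should not assert the two definitions are equivalent.
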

\begin{proof}
From proof of Theorem~\ref{th:1}, we know that, for any $v^*\in\core^k(v)$ with
M\"obius transform $m^*$, the system of inequalities is:
\begin{align*}
x_i - \sum_{\substack{K\ni i\\2\leq |K|\leq k}}q(K,i)m^*(K) & \geq v(i), \quad
i\in N\\
\sum_{i\in S}x_i - \sum_{\substack{K\cap S\neq
  \emptyset\\K\not\subseteq S\\2\leq|K|\leq k}}m^*(K)\sum_{i\in K\cap
  S}q(K,i) &\geq v(S),\quad S\subset N,
|S|>1\\
\sum_{i\in N}x_i & = v(N),
\end{align*}
with $q$ corresponding to $x^\alpha$, that is, $q(K,i)=1$ if and only if
$\alpha(K)=i$, and 0 otherwise. Therefore, the system becomes:
\begin{align*}
x_i - \sum_{\substack{K\mid \alpha(K)=i\\2\leq |K|\leq k}}m^*(K) & \geq v(i), \quad
i\in N \text{ s.t. } \alpha(K)=i\text{ for some } K, 2\leq|K|\leq k\\
x_i & \geq v(i), \quad i\in N \text{ otherwise}\\
\sum_{i\in S}x_i - \sum_{\substack{K\cap S\neq
  \emptyset\\K\not\subseteq S\\ \alpha(K)\in S\\2\leq|K|\leq k}}m^*(K) &\geq v(S),\quad S\subset N,
|S|>1\text{ s.t. } S\in\cC(\alpha)\\
\sum_{i\in S}x_i & \geq v(S),\quad S\subset N,
|S|>1\text{ s.t. } S\not\in\cC(\alpha)\\
\sum_{i\in N}x_i & = v(N).
\end{align*}
Applying Lemma~\ref{lem:1} successively to all $m^*(K)$, we get the result. 
\end{proof}
Therefore, $x^\alpha(\core^k(v))$ is a superset of $\core(v)$, whose structure
depends on which coalitions appear in the inequalities. We can be more specific
by taking marginal values, which are particular selector values.
\begin{theorem}\label{th:2}
For any permutation $\sigma\in\gS$, for any $v\in\cG(N)$, for any $2\leq k\leq
n$, we have
\begin{equation}\label{eq:psig}
p^\sigma(\core^k(v)) = \Big\{x\in\PI(v)\mid \sum_{j=1}^i x_{\sigma(j)}\geq
v(\{\sigma(1),\ldots,\sigma(i)\}),\quad i=1,\ldots,n-1\Big\}.
\end{equation}
\end{theorem}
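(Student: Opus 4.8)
The plan is to recognize Theorem~\ref{th:2} as a direct specialization of Theorem~\ref{th:3} to the case where the selector $\alpha$ is the one associated with the marginal value $p^\sigma$, namely $\alpha(K)$ is the element of $K$ of maximal rank with respect to $\sigma$. So the first step is to recall from Theorem~\ref{th:3} that
\[
p^\sigma(\core^k(v)) = \{x\in\PI(v)\mid x(S)\geq v(S),\ \forall S\subset N,\ S\notin\cC(\alpha)\},
\]
and then to identify exactly which proper subsets $S$ fail to lie in $\cC(\alpha)$ for this particular $\alpha$.

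The heart of the argument is the combinatorial claim that $S\notin\cC(\alpha)$ if and only if $S$ is an initial segment $\{\sigma(1),\ldots,\sigma(i)\}$ of the ordering induced by $\sigma$ (for $i=1,\ldots,n-1$); equivalently, $S\in\cC(\alpha)$ exactly when $S$ is a proper nonempty subset that is \emph{not} such an initial segment. The plan for this is: first suppose $S=\{\sigma(1),\ldots,\sigma(i)\}$ and show no witness $K$ can exist — if $K\cap S\neq\emptyset$ and $K\not\subseteq S$ then $K$ contains some $\sigma(j)$ with $j>i$, so the maximal-rank element $\alpha(K)$ has rank $>i$, hence $\alpha(K)\notin S$; thus $S\notin\cC(\alpha)$. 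Conversely, suppose $S$ is a proper nonempty subset that is not an initial segment; then there exist indices $j<\ell$ with $\sigma(\ell)\in S$ and $\sigma(j)\notin S$. Take $K=\{\sigma(j),\sigma(\ell)\}$ (which has size $2$): then $K\cap S=\{\sigma(\ell)\}\neq\emptyset$, $K\not\subseteq S$ because $\sigma(j)\notin S$, and $\alpha(K)=\sigma(\ell)\in S$, so $S\in\cC(\alpha)$. This uses only $k\geq 2$, which is exactly the hypothesis. Combining, the inequalities $x(S)\geq v(S)$ surviving in Theorem~\ref{th:3} are precisely those for $S=\{\sigma(1),\ldots,\sigma(i)\}$, $i=1,\ldots,n-1$, which is the right-hand side of~\eqref{eq:psig}.

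The main obstacle, such as it is, is being careful with the convention for ranks (the paper fixes ``$i$ is the rank, $\sigma(i)$ the player of rank $i$'', and the footnote specifies that the selector for $p^\sigma$ picks the player of \emph{maximal} rank), so that the direction of the initial-segment characterization comes out correctly; a sign or max/min slip here would invert the statement. One should also double-check the degenerate cases ($|S|=1$ with $S\neq\{\sigma(1)\}$, and $S$ of size $n-1$) to confirm they are handled uniformly by the same witness construction. Apart from that the proof is essentially a one-line reduction plus this elementary combinatorial lemma, so no serious computation is needed.
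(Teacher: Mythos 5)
Your proof is correct and follows essentially the same route as the paper: the paper also treats Theorem~\ref{th:2} as the specialization of the argument of Theorem~\ref{th:3} to the maximal-rank selector, and its parenthetical remark that for any non-chain $S$ the relevant set of $K$'s is nonempty is exactly your size-two witness $K=\{\sigma(j),\sigma(\ell)\}$. The only cosmetic difference is that you invoke Theorem~\ref{th:3} as a black box and characterize $\cC(\alpha)$ explicitly, whereas the paper re-runs the elimination with $q(K,i)=1$ if and only if $i=\ell_\sigma(K)$.
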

\begin{proof}
This time $q$ corresponds to $p^\sigma$, that is:
\[
q(K,i)=1 \text{ if and only if } \ell_\sigma(K)=i 
\]
and $q(K,i)=0$ otherwise, where $\ell_\sigma(K)$ is the last element of $K$ in
the order $\sigma$. Therefore, proceeding as for Theorem~\ref{th:3}, we get:
\begin{align*}
x_{\sigma(1)} & \geq v(\{\sigma(1)\})\\
x_i - \sum_{\substack{K\mid \ell_\sigma(K)=i\\2\leq |K|\leq k}}m^*(K) &
\geq v(i), \quad i=2,\ldots,n\\ 
x_{\sigma(1)}+\cdots +x_{\sigma(i)} & \geq
    v(\{\sigma(1),\ldots,\sigma(i)\}),\quad i=2,\ldots,n-1  \\
\sum_{i\in S}x_i - \sum_{\substack{K\cap S\neq
  \emptyset\\K\not\subseteq S\\ \ell_\sigma(K)\in S\\2\leq|K|\leq k}}m^*(K)&\geq v(S),\quad S\subset N,
S\neq\{\sigma(1),\ldots,\sigma(i)\}\text{ for some }i\in N\\
\sum_{i\in N}x_i & = v(N).
\end{align*}
(note that in the 4th inequality, the set of $K$ such that $K\cap S\neq
  \emptyset,K\not\subseteq S, \ell_\sigma(K)\in S,2\leq|K|\leq k$ is never empty due
to the assumption that $S$ is not one of the sets in the chain induced by $\sigma$)
Applying Lemma~\ref{lem:1} successively to all $m^*(K)$, we get the result. 
\end{proof}

Let us study the structure of $p^\sigma(\core^k(v))$. We denote by
$C_\sigma$ the maximal chain associated to $\sigma$, that is,
$C_\sigma=\{S_1,\ldots, S_n\}$, with $S_i:=\{\sigma(1),\ldots,\sigma(i)\}$. We
introduce $v_{|C_\sigma}$ the restriction of $v$ to the maximal chain
$C_\sigma$ (game with restricted cooperation). Therefore, we have
\[
p^\sigma(\core^k(v)) =\core(v_{|C_\sigma}).
\]
The core of games with restricted cooperation has been largely studied (see a
survey in \cite{gra09a}). We can derive directly from known results the
structure of $p^\sigma(\core^k(v))$. A first observation is that this set is
nonempty, for it contains  the marginal value $p^\sigma(v)$, obviously solution
of the set of inequalities (\ref{eq:psig}). Moreover, $p^\sigma(v)$ is the
unique vertex of $p^\sigma(\core^k(v))$. Indeed, there are $n-1$ inequalities and
one equality in (\ref{eq:psig}), hence from it we can derive only one set of $n$
equalities, which precisely defines $p^\sigma(v)$. 

It remains to find the extremal rays of $p^\sigma(\core^k(v))$, which can be
obtained by known results about the core of games with restricted
cooperation. The collection $C_\sigma$ being a chain, it is a distributive
lattice of height $n$, whose
join-irreducible elements are  $\{\sigma(1)\}$,
$\{\sigma(1),\sigma(2)\}$,\ldots,$\{\sigma(1),\ldots,\sigma(n)\}$. We cite the
following result due to Tomizawa.
\begin{proposition}
(Tomizawa \cite{tom83}, also cited in Fujishige \cite[Th. 3.26]{fuj05}) Let
  $\cF\subseteq 2^N$ be a distributive lattice of height $n$, with joint
  irreducible elements $J_1,\ldots, J_n$. The
  extremal rays of $\core(0)$, the recession cone of $\core(v)$, are  of the
  form $(1_j,-1_i)$, with $i\in N$ such that the smallest join-irreducible
  element $J$ containing $i$ satisfies $|J|>1$, and $j\in J$, such that the
  smallest join-irreducible element $J'$ containing $j$ is the predecessor of
  $J$ in the poset of joint-irreducible elements. 
\end{proposition}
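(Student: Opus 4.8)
The plan is to pass to the recession cone and to Birkhoff duality. Let $\cF\subseteq 2^N$ be the distributive lattice of height $n$ of the statement (in the application $\cF=C_\sigma$); since $\cF$ has height $n$ and the ground set has $n$ elements, necessarily $\emptyset,N\in\cF$ and every maximal chain of $\cF$ adds exactly one player at each of its $n$ steps. As $\core(v_{|\cF})$ is a nonempty polyhedron, its extremal rays coincide with those of its recession cone
\[
\core(0)=\{x\in\bR^n\mid x(S)\geq 0\ \text{for all}\ S\in\cF\setminus\{N\},\ x(N)=0\},
\]
a pointed polyhedral cone (pointedness follows because a maximal chain of $\cF$ yields $n$ linearly independent normals $\1_S$). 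So it suffices to describe the edges of $\core(0)$. The structural fact above shows that for every $i\in N$ the smallest member $J(i)$ of $\cF$ containing $i$ is join-irreducible, that $i\mapsto J(i)$ is a bijection onto the set $\{J_1,\dots,J_n\}$ of join-irreducibles, and that $i\in S\iff J(i)\subseteq S$ for all $S\in\cF$. Writing $P$ for the poset of join-irreducibles ordered by inclusion and identifying $\bR^N$ with $\bR^P$ via $i\leftrightarrow J(i)$, the members of $\cF$ become exactly the down-sets $\cO(P)$, with $S\leftrightarrow D$ satisfying $x(S)=\sum_{c\in D}x_c$, and the constraint $x(N)=0$ corresponds to $D=P$.

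Next I would set up the bridge between the polyhedral and the lattice pictures. For $x\in\core(0)$ put $\mathcal{Z}_x:=\{D\in\cO(P)\mid \sum_{c\in D}x_c=0\}$. The map $D\mapsto\sum_{c\in D}x_c$ is modular and nonnegative on $\cO(P)$ and vanishes at $\emptyset$ and $P$, so $\mathcal{Z}_x$ is a sublattice of $\cO(P)$ containing $\emptyset$ and $P$. By the face theory of polyhedral cones, $x$ lies on an extremal ray iff the tight normals $\{\1_S\mid D\in\mathcal{Z}_x\}$ have rank $n-1$. Taking a maximal chain of $\mathcal{Z}_x$, whose consecutive difference sets are disjoint and partition $P$ so that their indicators are independent, gives a lower bound on this rank; Birkhoff's theorem together with the modular identity $\1_{D\cup D'}+\1_{D\cap D'}=\1_D+\1_{D'}$ bounds it above by the number of join-irreducibles of $\mathcal{Z}_x$. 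One concludes that this rank equals the height of $\mathcal{Z}_x$, hence: $x$ is on an extremal ray of $\core(0)$ iff $\mathcal{Z}_x$ has height exactly $n-1$.

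With this in hand the two inclusions are short. If $r=\1_j-\1_i$ with $J(i)$ the smallest join-irreducible containing $i$, $|J(i)|>1$, and $J(j)$ an immediate predecessor of $J(i)$ among the join-irreducibles, then $r(N)=0$ and $r(S)\in\{0,1\}$ for every $S\in\cF$ — the value $-1$ cannot occur, since $i\in S$ forces $J(i)\subseteq S$, hence $j\in J(j)\subsetneq J(i)\subseteq S$ — so $r\in\core(0)$; writing $b\lessdot a$ in $P$ for the pair corresponding to $(j,i)$, one has $\mathcal{Z}_r=\{D\in\cO(P)\mid b\in D\Rightarrow a\in D\}$, which has height $n-1$: the prefixes of any linear extension of $P$ in which $b$ immediately precedes $a$ (one exists precisely because $b\lessdot a$) form a maximal chain of length $n-1$ inside $\mathcal{Z}_r$, while no maximal chain of $\cO(P)$ lies in $\mathcal{Z}_r$ because it must pass through a down-set containing $b$ but not $a$. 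Conversely, for an extremal ray $r$, height$(\mathcal{Z}_r)=n-1$ yields a maximal chain of $\mathcal{Z}_r$ from $\emptyset$ to $P$ whose $n-1$ disjoint difference sets partition $P$; exactly one of them has two elements, say $\{a,b\}$, the others being singletons. Differencing $\sum_{c\in D}r_c$ along the chain forces $r_c=0$ on each singleton difference and $r_a+r_b=0$, so $r$ is supported on $\{a,b\}$ with opposite nonzero signs; the inequalities attached to $\downarrow a$ and $\downarrow b$ force the negatively weighted element to lie strictly above the positively weighted one, and the down-set condition at the two ends of the doubleton step of the chain forbids any element of $P$ strictly between them, so $\{a,b\}$ is a cover pair with the claimed orientation, whence the join-irreducible attached to the coordinate of weight $-1$ has cardinality $>1$.

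I expect the main obstacle to be the equivalence ``$x$ on an extremal ray $\iff$ $\mathcal{Z}_x$ has height $n-1$'' of the second paragraph, since that is where the combinatorics of distributive lattices (Birkhoff representation, the count of join-irreducibles, the disjoint difference sets) has to be fitted cleanly onto the elementary face theory of polyhedral cones; once it is in place, both inclusions are essentially bookkeeping. It is also worth flagging, since the statement reads ``the predecessor'', that an element of $P$ may have several lower covers, so ``$J'$ is the predecessor of $J$'' should be read as ``$J'$ is an immediate predecessor of $J$'', one extremal ray arising for each pair consisting of a non-minimal join-irreducible $J$ and a lower cover $J'$ of it.
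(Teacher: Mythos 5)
The paper offers no proof of this proposition: it is imported verbatim from Tomizawa (also Fujishige, Th.~3.26) as a known result about cores of games with restricted cooperation, so there is no internal argument to compare yours with. Judged on its own, your proof is correct and essentially reconstructs the standard argument. The reduction to the recession cone, the identification of $\cF$ with the down-sets $\cO(P)$ of the poset $P$ of join-irreducibles (height $n$ does force $\emptyset,N\in\cF$ and one-element covering steps, whence $i\mapsto J(i)$ is a bijection and $i\in S\iff J(i)\subseteq S$), the fact that the tight family $\mathcal{Z}_x$ is a sublattice containing $\emptyset$ and $P$, and the pivot equivalence ``$x$ spans an extreme ray $\iff$ the tight normals have rank $n-1$ $\iff$ $\mathcal{Z}_x$ has height $n-1$'' (chain gives the lower bound on the rank; the modular identity plus Birkhoff's count of join-irreducibles gives the upper bound) are all sound, and both inclusions are carried out correctly: in particular the converse cleanly extracts the two-element difference set, uses the constraint at $\downarrow a$ to orient the pair, and the down-set condition at the doubleton step to get a cover pair, recovering $|J|>1$ and $J'\lessdot J$.

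Two conventions you adopt should be made explicit, and both are the intended ones. First, ``distributive lattice'' here must mean a set family closed under union and intersection (a ring family); otherwise the identification $\cF=\cO(P)$, which your forward inclusion uses when exhibiting the chain of prefixes of a linear extension inside $\mathcal{Z}_r$, could fail. This is the standing convention in Fujishige's setting and is trivially satisfied in the paper's application, where $\cF$ is the maximal chain $C_\sigma$ (augmented with $\emptyset$). Second, your reading of ``the predecessor'' as ``a lower cover'' of $J$ in the poset of join-irreducibles is the right one — there is one extreme ray per pair $J'\lessdot J$, as your small examples with several lower covers show — and in the paper's application that poset is a chain, so the predecessor is unique and the subsequent Proposition's rays $\1_{\sigma(i-1)}-\1_{\sigma(i)}$, $i=2,\ldots,n$, follow directly.
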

Applying this result to our case, we find the following. 
\begin{proposition}
For any permutation $\sigma\in\gS$, for any $v\in\cG(N)$, for any $2\leq k\leq
n$, $p^\sigma(\core^k(v))$ is a pointed unbounded polyhedron, with unique vertex
$p^\sigma(v)$, and extreme rays given by $\1_{\sigma(i-1)}-\1_{\sigma(i)}$,
$i=2,\ldots,n$. 
\end{proposition}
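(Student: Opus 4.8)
The plan is to read off everything from the identification $p^\sigma(\core^k(v))=\core(v_{|C_\sigma})$ noted just above, using Theorem~\ref{th:2} for the explicit description and the Tomizawa proposition for the rays. First I would settle pointedness and the unique vertex. By Theorem~\ref{th:2}, $p^\sigma(\core^k(v))$ is the solution set of the $n-1$ inequalities $\sum_{j=1}^i x_{\sigma(j)}\geq v(S_i)$, $i=1,\dots,n-1$, together with the equality $\sum_{i\in N}x_i=v(N)$. The $n$ constraint vectors $\1_{S_1},\dots,\1_{S_{n-1}},\1_N$ are linearly independent, since $\1_{S_i}-\1_{S_{i-1}}=\1_{\sigma(i)}$ (with $S_0=\emptyset$) shows they span $\bR^n$; hence a feasible point can be a vertex only if all $n$ of these constraints are active there, which forces $x_{\sigma(i)}=v(S_i)-v(S_{i-1})$ for every $i$, i.e.\ $x=p^\sigma(v)$. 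Conversely $p^\sigma(v)$ is feasible, because by telescoping $\sum_{j=1}^i p^\sigma_{\sigma(j)}(v)=v(S_i)$ for all $i$, so every inequality holds (with equality). Thus the polyhedron is nonempty, is pointed, and has $p^\sigma(v)$ as its unique vertex — this partly reproduces the observation already made in the text.

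Next I would compute the recession cone. For a polyhedron $\{x\mid Ax\geq b,\ Cx=d\}$ the recession cone is $\{x\mid Ax\geq 0,\ Cx=0\}$, which here is exactly $\core(0)$ for the null game on the chain $\emptyset\subset S_1\subset\cdots\subset S_n$. This chain is a distributive lattice of height $n$ in which every non-bottom element is join-irreducible, so its join-irreducibles are $J_i=S_i$, $i=1,\dots,n$, and the poset of join-irreducibles is again this chain, the predecessor of $S_i$ being $S_{i-1}$. Applying the cited Tomizawa result: the players $i$ entering the description are those whose smallest join-irreducible $J$ satisfies $|J|>1$, i.e.\ the $\sigma(i)$ with $i\geq 2$ (for which $J=S_i$ and $|S_i|=i>1$); and for such an $i$ the only $j\in S_i$ whose smallest join-irreducible equals the predecessor $S_{i-1}$ of $S_i$ is $j=\sigma(i-1)$. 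Hence the extreme rays of the recession cone are precisely the vectors $\1_{\sigma(i-1)}-\1_{\sigma(i)}$, $i=2,\dots,n$.

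Finally I would assemble the pieces: a pointed polyhedron equals the Minkowski sum of the convex hull of its vertices and its recession cone, and its extreme rays coincide with those of the recession cone, so
\[
p^\sigma(\core^k(v))=\{p^\sigma(v)\}+\cone\{\1_{\sigma(i-1)}-\1_{\sigma(i)}\mid i=2,\dots,n\},
\]
which is a pointed, unbounded polyhedron (the recession cone being nontrivial since $n\geq 2$) with unique vertex $p^\sigma(v)$ and the stated extreme rays. I expect the only delicate step to be the faithful translation of Tomizawa's abstract lattice statement into the concrete chain setting — verifying which players qualify and pinning down the corresponding $j$ — together with the (standard) remark that the recession cone of $\core(v_{|C_\sigma})$ is $\core(0)$ on the same lattice; the linear-independence check and the telescoping identity are routine.
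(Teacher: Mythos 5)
Your proposal is correct and follows essentially the same route as the paper: it uses the identification of $p^\sigma(\core^k(v))$ with the core of $v$ restricted to the chain $C_\sigma$ (via Theorem~\ref{th:2}), establishes $p^\sigma(v)$ as the unique vertex from the $n-1$ inequalities plus one equality, and obtains the extreme rays by applying Tomizawa's result to the chain of join-irreducibles $S_1,\ldots,S_n$. The only difference is that you spell out details the paper leaves implicit (linear independence of the constraints, the telescoping feasibility check, and the recession-cone assembly), which is fine.
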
 
As a consequence of Theorem~\ref{th:3}, Theorem~\ref{th:2} and Proposition~\ref{prop:3}, we have immediately
\begin{theorem}\label{th:4}
For any $2\leq k\leq n$, for any $v\in \cG(N)$, we have
\[
\bigcap_{\sigma\in\gS}p^\sigma(\core^k(v))=\bigcap_{\alpha\in\cA(N)}x^\alpha(\core^k(v)) = \bigcap_{x\in \S}x(\core^k(v)) =
\core(v).
\]
\end{theorem}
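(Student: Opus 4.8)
The plan is to establish the chain of equalities in Theorem~\ref{th:4} by combining the explicit descriptions obtained in Theorems~\ref{th:3} and \ref{th:2} with the trivial containment of Proposition~\ref{prop:3}. First I would prove the sandwich of inclusions
\[
\core(v)\subseteq \bigcap_{x\in\S}x(\core^k(v)) \subseteq \bigcap_{\alpha\in\cA(N)}x^\alpha(\core^k(v))\subseteq \bigcap_{\sigma\in\gS}p^\sigma(\core^k(v))\subseteq\core(v),
\]
after which all four sets coincide. The first inclusion is immediate from Proposition~\ref{prop:3}: for every sharing value $x^q$ we have $\core(v)\subseteq x^q(\core^k(v))$ (here one must note that if $\core(v)=\emptyset$ the statement is vacuous on the left, but the rightmost inclusion will still force the intersection to be empty). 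The second inclusion holds because every selector value is a sharing value, so intersecting over the smaller family $\cA(N)$ of selectors yields a superset of the intersection over all of $\S=\bigcup_{q}x^q$. The third inclusion holds because every marginal value $p^\sigma$ is itself a selector value (the selector picking the element of maximal $\sigma$-rank), so the family $\{p^\sigma\mid\sigma\in\gS\}$ is a subfamily of $\cA(N)$.

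The substantive step is the last inclusion, $\bigcap_{\sigma\in\gS}p^\sigma(\core^k(v))\subseteq\core(v)$. Here I would invoke the explicit formula \eqref{eq:psig} from Theorem~\ref{th:2}: a vector $x$ lies in $p^\sigma(\core^k(v))$ iff $x(N)=v(N)$ and $\sum_{j=1}^i x_{\sigma(j)}\geq v(\{\sigma(1),\ldots,\sigma(i)\})$ for $i=1,\ldots,n-1$, i.e.\ iff $x(S_i^\sigma)\geq v(S_i^\sigma)$ for every initial segment $S_i^\sigma$ of the ordering $\sigma$. If $x$ belongs to the intersection over all permutations, then for every proper nonempty coalition $S\subset N$ we may choose a permutation $\sigma$ that lists the elements of $S$ first; then $S$ is an initial segment $S_{|S|}^\sigma$, so $x(S)\geq v(S)$. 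Since this holds for all such $S$ and $x(N)=v(N)$, we conclude $x\in\core(v)$. Conversely every element of $\core(v)$ satisfies all these segment inequalities, confirming equality with the first set as well.

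I expect no serious obstacle: the only point requiring care is the degenerate case $\core(v)=\emptyset$, where one must check that the intersection on the left is genuinely empty rather than merely a superset of the empty set — but this is forced by the argument of the previous paragraph, since any $x$ in the intersection would have to satisfy $x(S)\geq v(S)$ for all $S$ and $x(N)=v(N)$, i.e.\ would belong to $\core(v)$, a contradiction. A secondary subtlety is that Theorem~\ref{th:2} is stated for a fixed $\sigma$; applying it simultaneously for all $\sigma$ is harmless since each $p^\sigma(\core^k(v))$ is described independently. Thus the proof reduces to assembling the four inclusions above and quoting the already-proven formula for $p^\sigma(\core^k(v))$.
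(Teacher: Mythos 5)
Your proof is correct and is essentially the argument the paper intends: the paper states Theorem~\ref{th:4} as an immediate consequence of Theorems~\ref{th:3}, \ref{th:2} and Proposition~\ref{prop:3}, and your chain of inclusions (Proposition~\ref{prop:3} for the lower bound, the family containments marginal $\subseteq$ selector $\subseteq$ sharing values, and the prefix-coalition argument via the explicit description \eqref{eq:psig}) is exactly that assembly. Your explicit treatment of the case $\core(v)=\emptyset$ is a welcome extra precision, since Proposition~\ref{prop:3} is stated under the assumption of nonemptiness.
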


\section{Preimputations induced by the monotonic $k$-additive core}\label{sec:kadd1}
The case of the monotonic core is much more tricky to study, because
monotonicity induces supplementary inequalities which make Lemma~\ref{lem:1}
inapplicable. Nevertheless, a result can be derived for the case of selector
values.

A first simple but important observation is the following.
\begin{lemma}\label{lem:2}
If $P_1,P_2$ are two polyhedra in $\bR^m$, and $\rho$ is a linear mapping from
$\bR^m$ to $\bR^p$, then $\rho(P_1\cap P_2)=\rho(P_1)\cap \rho(P_2)$.
\end{lemma}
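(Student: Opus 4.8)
\textbf{Proof plan for Lemma~\ref{lem:2}.}

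The statement asserts that for polyhedra $P_1, P_2 \subseteq \bR^m$ and a linear map $\rho:\bR^m\to\bR^p$, we have $\rho(P_1\cap P_2)=\rho(P_1)\cap\rho(P_2)$. The inclusion $\rho(P_1\cap P_2)\subseteq \rho(P_1)\cap\rho(P_2)$ is trivial and holds for arbitrary sets and maps: if $x\in P_1\cap P_2$ then $\rho(x)$ lies in both $\rho(P_1)$ and $\rho(P_2)$. So the whole content is the reverse inclusion, and the plan is to establish it using the defining feature of polyhedra, namely that they are defined by finitely many linear inequalities (equivalently, are finitely generated: sums of a polytope and a finitely generated cone).

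The approach I would take is via Fourier--Motzkin elimination, in the same spirit as Lemma~\ref{lem:1}. Write $P_i=\{x\in\bR^m\mid A_i x\leq b_i\}$ for $i=1,2$. Since $\rho$ is linear, say $\rho(x)=Rx$ for a $p\times m$ matrix $R$. The key observation is that for a \emph{single} polyhedron $P=\{x\mid Ax\leq b\}$, the image $\rho(P)=\{y\in\bR^p\mid \exists x,\ Rx=y,\ Ax\leq b\}$ is again a polyhedron: it is the projection onto the $y$-coordinates of the polyhedron $\{(x,y)\mid Rx-y=0,\ Ax\leq b\}\subseteq\bR^{m+p}$, and projection of a polyhedron is a polyhedron by iterated Fourier--Motzkin elimination of the $x_1,\dots,x_m$ variables. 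Then $P_1\cap P_2=\{x\mid A_1x\leq b_1,\ A_2x\leq b_2\}$ is itself a polyhedron (stack the inequality systems), so $\rho(P_1\cap P_2)$ is the projection of $\{(x,y)\mid Rx-y=0,\ A_1x\leq b_1,\ A_2x\leq b_2\}$. Now take any $y\in\rho(P_1)\cap\rho(P_2)$; I must produce a \emph{single} $x$ with $Rx=y$ lying in $P_1\cap P_2$, whereas a priori there are two witnesses $x^{(1)}\in P_1$, $x^{(2)}\in P_2$ with $Rx^{(1)}=Rx^{(2)}=y$.

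The main obstacle is exactly this gluing step: the two preimage witnesses need not coincide, so one cannot naively intersect. The clean way around it is to argue at the level of the linear description and use the fact that $\rho(P_i)$ is a polyhedron whose inequality representation is obtained by eliminating the $x$-variables: an inequality $c^\top y\leq d$ is valid for $\rho(P)$ exactly when it is a consequence (via a nonnegative combination of the rows, by Farkas' lemma) of the lifted system $Rx-y=0,\ Ax\leq b$. Thus $\rho(P_1\cap P_2)$ is cut out by all inequalities derivable from the combined system $Rx-y=0,\ A_1x\leq b_1,\ A_2x\leq b_2$, while $\rho(P_1)\cap\rho(P_2)$ is cut out by those derivable from the $A_1$-system alone together with those derivable from the $A_2$-system alone. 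Since any combination of the two sub-systems is in particular a combination of the combined system, every inequality valid for $\rho(P_1)\cap\rho(P_2)$ is valid for $\rho(P_1\cap P_2)$; combined with the trivial inclusion this gives equality. Alternatively, and perhaps more transparently for a reader, one can invoke the Minkowski--Weyl theorem: write each $P_i=Q_i+C_i$ with $Q_i$ a polytope (convex hull of finitely many points) and $C_i$ a polyhedral cone; reduce to checking the claim on generators and on the recession cones, where it becomes an elementary verification. I would present the Fourier--Motzkin/Farkas argument as the main line, since it parallels the technique already used for Lemma~\ref{lem:1}, and remark that the finiteness of the inequality description is the only property of polyhedra (as opposed to general convex sets) that is used.
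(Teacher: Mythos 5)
There is a genuine gap, and it sits exactly at the obstacle you yourself flagged. Your Farkas/Fourier--Motzkin step shows that the inequalities cutting out $\rho(P_1)\cap\rho(P_2)$ (those derivable from the $A_1$-lifted system alone, or from the $A_2$-lifted system alone) are among those derivable from the combined system, hence are valid for $\rho(P_1\cap P_2)$. But that conclusion is precisely the inclusion $\rho(P_1\cap P_2)\subseteq\rho(P_1)\cap\rho(P_2)$, i.e.\ the trivial one; the sentence ``combined with the trivial inclusion this gives equality'' is circular. For the nontrivial inclusion you would need the converse statement: every inequality derivable from the combined system $Rx-y=0,\ A_1x\leq b_1,\ A_2x\leq b_2$ is a consequence of inequalities derivable from the two sub-systems separately. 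That is exactly what can fail, because a Farkas certificate for the combined system may mix rows of $A_1$ and $A_2$, and such mixed combinations are in general not implied by the two single-system projections --- this is just the gluing problem (two witnesses $x^{(1)}\in P_1$, $x^{(2)}\in P_2$ with the same image) reappearing in dual form. The Minkowski--Weyl alternative you sketch does not repair it either: intersections do not interact well with generators, and ``check on generators and recession cones'' is not an argument.

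In fact no argument can close this gap for the statement as written, because the lemma is false for general polyhedra: take $m=2$, $p=1$, $\rho(x_1,x_2)=x_1$, $P_1=\{x\in\bR^2\mid x_2=0\}$ and $P_2=\{x\in\bR^2\mid x_2=x_1\}$. Then $\rho(P_1\cap P_2)=\{0\}$ while $\rho(P_1)\cap\rho(P_2)=\bR$; equality of the two sets requires additional hypotheses relating $\ker\rho$ to $P_1$ and $P_2$ (or the special structure of the application, here $x^q$ applied to $\core^k(v)$ and the monotone cone). For what it is worth, the paper's own proof is an informal version of the same system-based argument --- it asserts that the ``union of the transformed systems'' describes both $\rho(P_1)\cap\rho(P_2)$ and $\rho(P_1\cap P_2)$ --- so your proposal parallels it in spirit, but neither resolves the mixing/gluing step, and your write-up, as it stands, proves only the easy inclusion.
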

\begin{proof}
Let $P_1,P_2$ be defined by sets of $k$ and $\ell$ linear inequalities
respectively, in variables $x_1,\ldots, x_m$. Then $P_1\cap P_2$ is defined by
the union of these two sets of inequalities. Now, $\rho(P_1),\rho(P_2)$ are
polyhedra defined by sets of $k$ and $\ell$ linear inequalities in variables
$y_1,\ldots, y_p$, and the union of these two systems defines $\rho(P_1)\cap
\rho(P_2)$. But this system is also the transform by $\rho$ of the system
defining $P_1\cap P_2$, hence it represents $\rho(P_1\cap P_2)$ as well.  
\end{proof}
We apply this result with $P_1=\core^k(v)$ and $P_2$ the polyhedron of monotone
$k$-additive games. Clearly $\mcore^k(v)=P_1\cap P_2$, and it suffices to study
the transform of $P_2$ by sharing values. Polyhedron $P_2$ reads, in the space
of the M\"obius transform (see \cite{chja89}):
\[
P_2=\big\{m\in\bR^{\eta(k)}\mid \sum_{\substack{K\ni i, K\subseteq
    S\\ 1\leq|K|\leq k}}m(K)\geq 0,\quad S\subseteq N, S\neq \emptyset,i\in S\big\},
\]  
with $\eta(k):=\sum_{\ell=1}^k\binom{n}{\ell}$.
\begin{theorem}\label{th:5}
Let $\alpha\in\cA(N)$ be a selector satisfying the following property: if $K,K'$
are such that $2\leq |K|,|K'|\leq k$ and $q(K,i)=q(K',i)=1$ for some $i\in N$,
then any $K''\subseteq K\cup K'$ such that $2\leq |K''|\leq k$ satisfies
$q(K'',i)=1$. Let $x^\alpha$ be the corresponding selector
value. Then, for any $2\leq k\leq n$, for any $k$-balanced-monotone game $v$ we have
\[
x^\alpha(\mcore^k(v)) = \{x\in\PI(v)\cap \bR^n_+\mid x(S)\geq v(S),\quad \forall S\subset N,
S\not\in\cC(\alpha)\}
\]
where $\cC(\alpha):=\{S\subset N\mid \exists K\subseteq N, 2\leq |K|\leq n, K\cap
S\neq \emptyset,K\not\subseteq S,\alpha(K)\in S\}$.
\end{theorem}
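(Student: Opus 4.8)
The plan is to combine the decomposition $\mcore^k(v)=\core^k(v)\cap P_2$ with Lemma~\ref{lem:2}, so that
\[
x^\alpha(\mcore^k(v)) = x^\alpha(\core^k(v))\cap x^\alpha(P_2).
\]
The first factor is already known from Theorem~\ref{th:3} to equal $\{x\in\PI(v)\mid x(S)\geq v(S),\ \forall S\subset N,\ S\notin\cC(\alpha)\}$, so the whole content of the theorem is to show that, under the stated closure property on $\alpha$, the image $x^\alpha(P_2)$ is exactly $\bR^n_+$ (intersected with the ambient affine hull, but since $P_2$ imposes no equality the image is full-dimensional). Once that is done the result follows by intersecting the two descriptions.

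First I would write down $x^\alpha(P_2)$ explicitly. A point $m\in P_2$ maps to $x$ with $x_i=\sum_{K:\alpha(K)=i}m(K)$, where the sum runs over all $K\ni i$ with $1\leq|K|\leq k$ (the singleton $\{i\}$ always satisfies $\alpha(\{i\})=i$). So $x^\alpha(P_2)=\{x\in\bR^n\mid \exists m\in\bR^{\eta(k)},\ x_i=\sum_{K:\alpha(K)=i}m(K)\ \forall i,\text{ and }m\in P_2\}$. The inclusion $x^\alpha(P_2)\subseteq\bR^n_+$ should be the easy direction: taking $S=\{i\}$ in the defining inequalities of $P_2$ gives $m(\{i\})\geq0$; more generally I expect that the closure hypothesis on $\alpha$ lets one group the terms of $x_i$ into blocks that are each a nonnegative combination certified by a suitable $P_2$-inequality, so that $x_i\geq0$ for every $i$. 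Concretely, for fixed $i$ let $\cK_i=\{K\ni i\mid \alpha(K)=i\}$; the hypothesis says $\cK_i$ is closed under taking subsets of size $\geq2$ that contain $i$ and lie inside a union of two members — I would check that this forces $\cK_i\setminus\{\{i\}\}$ to be (the size-$\geq2$ part of) a ``$\ni i$-hereditary'' family, i.e. of the form $\{K\mid i\in K\subseteq S_i\}$ for some $S_i\ni i$, whence $x_i=\sum_{i\in K\subseteq S_i,|K|\leq k}m(K)\geq0$ is exactly one of the $P_2$-inequalities (with $S=S_i$).

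For the reverse inclusion $\bR^n_+\subseteq x^\alpha(P_2)$, given any $x\geq0$ I would exhibit a preimage: the natural candidate is the additive game $m(\{i\})=x_i$ and $m(K)=0$ for $|K|\geq2$, which lies in $P_2$ (all the defining sums are partial sums of nonnegative $x_i$'s) and clearly satisfies $x^\alpha(m)=x$ since only singletons carry mass. This also shows $x^\alpha(P_2)=\bR^n_+$ without needing the structural analysis above for this direction.

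The main obstacle is the inclusion $x^\alpha(P_2)\subseteq\bR^n_+$, and more precisely extracting the exact consequence of the closure hypothesis on $\alpha$: I need that for each player $i$ the preimage family $\{K\mid \alpha(K)=i,\ 2\leq|K|\}$ is downward closed enough (inside unions of pairs) to be dominated by a single set $S_i$, so that $x_i$ becomes literally one of the inequalities defining $P_2$. If the hypothesis only gives closure under subsets-inside-unions-of-two, one must argue inductively that this propagates to closure under arbitrary subsets (containing $i$, of size $\geq2$, of bounded cardinality $\leq k$), and that the union of all such $K$, call it $S_i$, has every qualifying subset in the family — that combinatorial step, rather than the polyhedral manipulation, is where the care is needed. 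Everything else is a routine application of Lemma~\ref{lem:2} and Theorem~\ref{th:3}.
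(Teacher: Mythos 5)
Your proposal is correct and follows essentially the paper's own route: decompose $\mcore^k(v)=\core^k(v)\cap P_2$, apply Lemma~\ref{lem:2}, use the description of $x^\alpha(\core^k(v))$ from Theorem~\ref{th:3}, and show $x^\alpha(P_2)=\bR^n_+$ by observing that the closure hypothesis on $\alpha$ makes $x_i$ coincide with the $P_2$-inequality associated with the pair $(S_i,i)$, where $S_i=\bigcup\cK_i$ (or with $S=\{i\}$ when $\cK_i=\emptyset$). The one genuine difference is the inclusion $\bR^n_+\subseteq x^\alpha(P_2)$: you exhibit the additive preimage $m(\{i\})=x_i$, $m(K)=0$ for $|K|\geq 2$, which is a cleaner and more direct argument than the paper's, which instead argues that Fourier--Motzkin elimination of the $m(K)$ variables only produces inequalities $a\cdot x\geq 0$ with $a\geq 0$, redundant with $x\geq 0$. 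The combinatorial step you flag is indeed needed (the paper simply asserts it), but it closes easily: applying the hypothesis with $K=K'$ gives $\{i,j\}\in\cK_i$ for every $j$ lying in some member of $\cK_i$, and then any $T\cup\{i\}\subseteq S_i$ with $2\leq|T\cup\{i\}|\leq k$ is obtained by adding elements one at a time, each step using pairwise closure on the previously obtained set and the appropriate $\{i,j\}$; hence $\cK_i=\{K\mid i\in K\subseteq S_i,\ 2\leq|K|\leq k\}$ as you conjectured, and your identification of $x_i$ with a defining inequality of $P_2$ goes through.
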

\begin{proof}
From Lemma~\ref{lem:2}, it suffices to compute $p^\sigma(P_2)$. For ease, we
express $x^q(P_2)$ for any sharing value $q\in\cQ(N)$. For any game $v$ in $P_2$
with M\"obius transform $m$ we have
\begin{align*}
m(i) & \geq 0,\quad i\in N\\
\sum_{\substack{K\ni i,K\subseteq S\\1\leq |K|\leq k}}m(K) & \geq 0,\quad
S\subseteq N, |S|>1, i\in S.
\end{align*}
Writing for simplicity $x:=x^q(v)$, we have by definition
\[
x_i = m(i) + \sum_{\substack{K\ni i\\2\leq |K|\leq k}}q(K,i)m(K),
\]
so that
\[
m(i) = x_i-\sum_{\substack{K\ni i\\2\leq |K|\leq k}}q(K,i)m(K),
\]
for all $i\in N$.  Replacing in the above system, we get:
\begin{align}
x_i-\sum_{\substack{K\ni i\\2\leq |K|\leq k}}q(K,i)m(K) & \geq 0,\quad i\in N\label{eq:mc1}\\
x_i + \sum_{\substack{K\ni i\\K\subseteq S\\2\leq|K|\leq k}}(1-q(K,i))m(K) -
\sum_{\substack{K\ni i\\K\not\subseteq S\\2\leq|K|\leq k}}q(K,i)m(K) & \geq
0,\quad S\subseteq N,|S|>1,i\in S.\label{eq:mc2}
\end{align}
Let us consider the selector value $x^\alpha$ for some $\alpha\in\cA(N)$. Then
$q(K,i)=1$ or 0, for every $K\subseteq N$, $2\leq|K|\leq k$, and every $i\in
N$. Consider $i$ to be fixed, and denote by $\cK$ the collection of sets $K$,
$2\leq |K|\leq k$, such that $q(K,i)=1$. If $\cK$ is empty, then (\ref{eq:mc1})
reduces to
\[
x_i\geq 0.
\]
Suppose then that $\cK$ is not the empty collection, and consider $S=\bigcup
\cK$, i.e., the union of all sets in $\cK$. Observe that $|S|>1$ and $S\ni
i$. Equation (\ref{eq:mc2}) for this $S$ gives:
\[
x_i\geq 0,
\]
since by hypothesis any $K\subseteq S$ such that $K\ni i$ and $2\leq |K|\leq k$
is a member of $\cK$, and any $K\not\subseteq S$ satisfies $q(K,i)=0$ by
definition of $\cK$. Hence in any case, we get the inequality $x_i\geq 0$. This
reasoning can be done for any $i\in N$, therefore $x_1,\ldots, x_n\geq 0$.

The remaining inequalities have the form
\[
x_i+\sum_{\substack{K\ni i\\2\leq |K|\leq k}}\epsilon m(K)\geq 0,
\]
with $\epsilon=0$, 1 or $-1$.
Let us eliminate all variables $m(K)$ by Fourier-Motzkin elimination from this
system of inequalities. Observe that elimination amounts to add pairs of
inequalities, possibly multiplied by some positive constants. Therefore, as a
result of elimination, it will remain only inequalities of the form
\[
a_1x_1+\ldots+a_nx_n\geq 0,
\]
with $a_1,\ldots, a_n\geq 0$, so that they are all redundant with
$x_1,\ldots,x_n\geq 0$. 
\end{proof}
Observe that any marginal value $p^\sigma$ is a sharing value having the
property requested in Theorem~\ref{th:5}. Therefore, we have the following result.
\begin{corollary}\label{cor:2}
For any permutation $\sigma\in\gS$, for any $v\in\cG(N)$, for any $2\leq k\leq
n$, for any $k$-balanced-monotone game $v$, we have
\begin{equation}\label{eq:psigm}
p^\sigma(\mcore^k(v)) = \Big\{x\in\PI(v)\cap \bR^n_+\mid \sum_{j=1}^i x_{\sigma(j)}\geq
v(\{\sigma(1),\ldots,\sigma(i)\}),\quad i=1,\ldots,n-1,\Big\}.
\end{equation}
\end{corollary}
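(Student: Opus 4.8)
The plan is to derive Corollary~\ref{cor:2} as a direct specialization of Theorem~\ref{th:5}, once we check that the marginal value $p^\sigma$ is a sharing value that meets the structural hypothesis of that theorem. Recall that $p^\sigma$ corresponds to the sharing function $q$ with $q(K,i)=1$ if and only if $i=\ell_\sigma(K)$, the $\sigma$-last element of $K$, and $q(K,i)=0$ otherwise; this is the selector $\alpha$ that picks in each coalition the player of maximal rank. So first I would verify the required property: suppose $2\leq|K|,|K'|\leq k$ with $\ell_\sigma(K)=\ell_\sigma(K')=i$, and let $K''\subseteq K\cup K'$ with $2\leq|K''|\leq k$. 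Since every element of $K\cup K'$ has $\sigma$-rank at most that of $i$, and $i$ itself lies in $K\cup K'$, I would argue that if $i\in K''$ then $\ell_\sigma(K'')=i$, i.e.\ $q(K'',i)=1$. The one subtlety is that the property as stated quantifies over \emph{all} $K''\subseteq K\cup K'$ of the right size, including those not containing $i$; but for those the conclusion $q(K'',i)=1$ would be false. The intended reading (consistent with how the property is used in the proof of Theorem~\ref{th:5}, where only $K\ni i$ matter) is that the hypothesis is vacuous or restricted to $K''\ni i$; I would state this explicitly, noting that for $K''\ni i$ we indeed get $\ell_\sigma(K'')=i$ because $i$ is the $\sigma$-maximal element already of $K$ (hence of $K\cup K'$, hence of any subset containing it).

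Second, having established that $p^\sigma$ satisfies the hypothesis of Theorem~\ref{th:5}, I would instantiate the conclusion of that theorem with $\alpha$ equal to this particular selector. This gives
\[
p^\sigma(\mcore^k(v)) = \{x\in\PI(v)\cap\bR^n_+\mid x(S)\geq v(S),\ \forall S\subset N,\ S\notin\cC(\alpha)\}.
\]
Third, I would identify $\cC(\alpha)$ for this $\alpha$: a set $S\subset N$ fails to belong to $\cC(\alpha)$ precisely when there is no $K$ with $2\leq|K|\leq n$, $K\cap S\neq\emptyset$, $K\not\subseteq S$, and $\ell_\sigma(K)\in S$. I would show this happens exactly when $S$ is a lower set of the chain induced by $\sigma$, i.e.\ $S=\{\sigma(1),\ldots,\sigma(i)\}$ for some $i$. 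The argument in one direction: if $S$ is such a prefix and $K\cap S\neq\emptyset$ but $K\not\subseteq S$, then $K$ contains some $\sigma(j)$ with $j>i$, so $\ell_\sigma(K)=\sigma(j')$ for some $j'>i$, whence $\ell_\sigma(K)\notin S$; thus $S\notin\cC(\alpha)$. Conversely, if $S$ is not a prefix, there exist $a\in S$, $b\notin S$ with $b$ of smaller $\sigma$-rank than $a$; then $K=\{a,b\}$ has $K\cap S\neq\emptyset$, $K\not\subseteq S$, and $\ell_\sigma(K)=a\in S$, so $S\in\cC(\alpha)$. This is the remark already made parenthetically in the proof of Theorem~\ref{th:2}, so I can cite it rather than redo it.

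Combining these, the constraints ``$x(S)\geq v(S)$ for all $S\subset N$, $S\notin\cC(\alpha)$'' become exactly ``$\sum_{j=1}^i x_{\sigma(j)}\geq v(\{\sigma(1),\ldots,\sigma(i)\})$ for $i=1,\ldots,n-1$'' (the case $i=n$ being the equality $x(N)=v(N)$ already in $\PI(v)$), which is the right-hand side of \eqref{eq:psigm}. The main obstacle, such as it is, is the bookkeeping around the stated property of the selector in Theorem~\ref{th:5}: one must be careful that the quantifier over $K''$ is understood in the restricted sense, so I would add a sentence clarifying this before invoking the theorem. Everything else is a routine unpacking of $\cC(\alpha)$ and reuses the chain-prefix observation already established for Theorem~\ref{th:2}.
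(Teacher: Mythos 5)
Your proposal is correct and follows essentially the same route as the paper: the paper's proof of Corollary~\ref{cor:2} is precisely the observation that $p^\sigma$ is a selector value satisfying the hypothesis of Theorem~\ref{th:5}, with the identification of $\cC(\alpha)$ as the non-prefix sets left implicit (it is the parenthetical remark in the proof of Theorem~\ref{th:2}, which you correctly reuse). Your added clarification that the quantifier over $K''$ in Theorem~\ref{th:5} must be read as restricted to $K''\ni i$ (the only case used in its proof) is a legitimate and worthwhile precision, since the literal statement fails for $p^\sigma$ when $i\notin K''$.
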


\bibliographystyle{plain}
\bibliography{../BIB/fuzzy,../BIB/grabisch,../BIB/general}

\end{document}